\newtheorem{thm}{Theorem}
\newtheorem{defn}{Definition}
\newtheorem{rem}{Remark}
\newtheorem{lem}{Lemma}
\newtheorem{cor}{Corollary}
\def\BibTeX{{\rm B\kern-.05em{\sc i\kern-.025em b}\kern-.08em
    T\kern-.1667em\lower.7ex\hbox{E}\kern-.125emX}}
\begin{document}

\title{Polymatroidal Representations of Aggregate EV Flexibility
Considering Network Constraints
}
\author{\IEEEauthorblockN{Karan Mukhi\IEEEauthorrefmark{1},
Alessandro Abate\IEEEauthorrefmark{1}}
\IEEEauthorblockA{\IEEEauthorrefmark{1}Department of Computer Science, University of Oxford, United Kingdom\\
Email: \{karan.mukhi, alessandro.abate\}@cs.ox.ac.uk}%
}

\IEEEpubid{\makebox[\columnwidth]{979-8-3315-2503-3/25/\$31.00 ©2025 Crown\hfill} \hspace{\columnsep}\makebox[\columnwidth]{}}

\maketitle
\IEEEpubidadjcol

\begin{abstract}
The increasing penetration of electric vehicles (EVs) introduces significant flexibility potential to power systems. However, uncoordinated or synchronous charging can lead to  overloading of distribution networks.
Extending recent approaches that utilize generalized polymatroids, a family of polytopes, to represent the aggregate flexibility of EV populations, we show how to integrate network constraints into this representation to obtain network-constrained aggregate flexibility sets. 
Furthermore, we demonstrate how to optimize over these network-constrained aggregate flexibility sets, and propose a disaggregation procedure that maps an aggregate load profile to individual EV dispatch instructions, while respecting both device-level and network constraints.
\end{abstract}

\begin{IEEEkeywords}
Demand response,
electric vehicles,
convex optimization
\end{IEEEkeywords}

\section{Introduction}
The rapid roll‐out of electric vehicles (EVs) is transforming electricity demand profiles at the distribution level. Projections for Europe suggest that, by 2035, EVs could account for over 10\% of total electricity consumption, with peak loads that eclipse today's residential evening peaks if charging is left uncoordinated \cite{InternationalEnergyAgency2024Global2024}. While demand from EVs has the potential to overwhelm distribution networks, populations of EVs also represent a vast reservoir of \textit{flexibility}: the ability to shift consumption in time without compromising user needs. Harnessing this flexibility to provide ancillary services is essential to enable power system operators to economically manage the intermittency and stability issues that arise from running grids with high penetrations of renewables. 

To enable their participation in ancillary service markets, a growing body of work has focused on characterizing the set of feasible aggregate load profiles for a population of energy flexible devices. Formally, this involves computing the Minkowski sum of a collection of polytopes \cite{Zhao2017ALoads}. In general computing the exact Minkowski sum is computationally expensive, hence most of these methods tend to focus on computing inner \cite{Taha2024AnPopulations} or outer \cite{Barot2017APolytopes} approximations of the aggregate flexibility  set. Recent work leverage \textit{generalized polymatroids} (g-polymatroids), a rich class of polytopes whose combinatorial structure enables \textit{exact} yet compact representations  of the aggregate flexibility of large, heterogeneous EV fleets \cite{Mukhi2025ExactPolymatroids}. However, most of this work ignores the physical limits of the distribution networks.
In practice, large numbers of EVs could be connected to the same feeders in the network, and hence be exposed to the same network constraints. Failure to consider these limits when co-coordinating large populations of EVs risks overloading the distribution grid \cite{Priyadarshan2025DistributionElectrification}. Hence, in this work we address this issue by extending the g-polymatroid framework to incorporate network constraints.

The contributions of this work are twofold. First, by leveraging a classical result on g-polymatroids, we provide an exact characterization of aggregate flexibility sets that simultaneously capture both device-level and network constraints. Second, we present a disaggregation procedure that, given an optimal aggregate load profile, computes dispatch instructions for individual EVs in the population. The rest of the paper is structured as follows, in Section \ref{sec:unconstrained} we introduce our EV charging model and review previous results detailing how to characterize the aggregate flexibility sets using g-polymatroids. Section \ref{sec:constrained} introduces some theoretical results that extends this formulation to the network constrained case. 
We discuss how to optimize over the proposed aggregate flexibility sets, and present a disaggregation procedure in Section~\ref{sec:disag}.  Finally, in Section \ref{sec:case} we provide a case study to show how this work may be used in practice, and conclusions are drawn in Section \ref{sec:conc}.
\begin{figure}[t]
    \centering
    \includegraphics[width=\columnwidth]{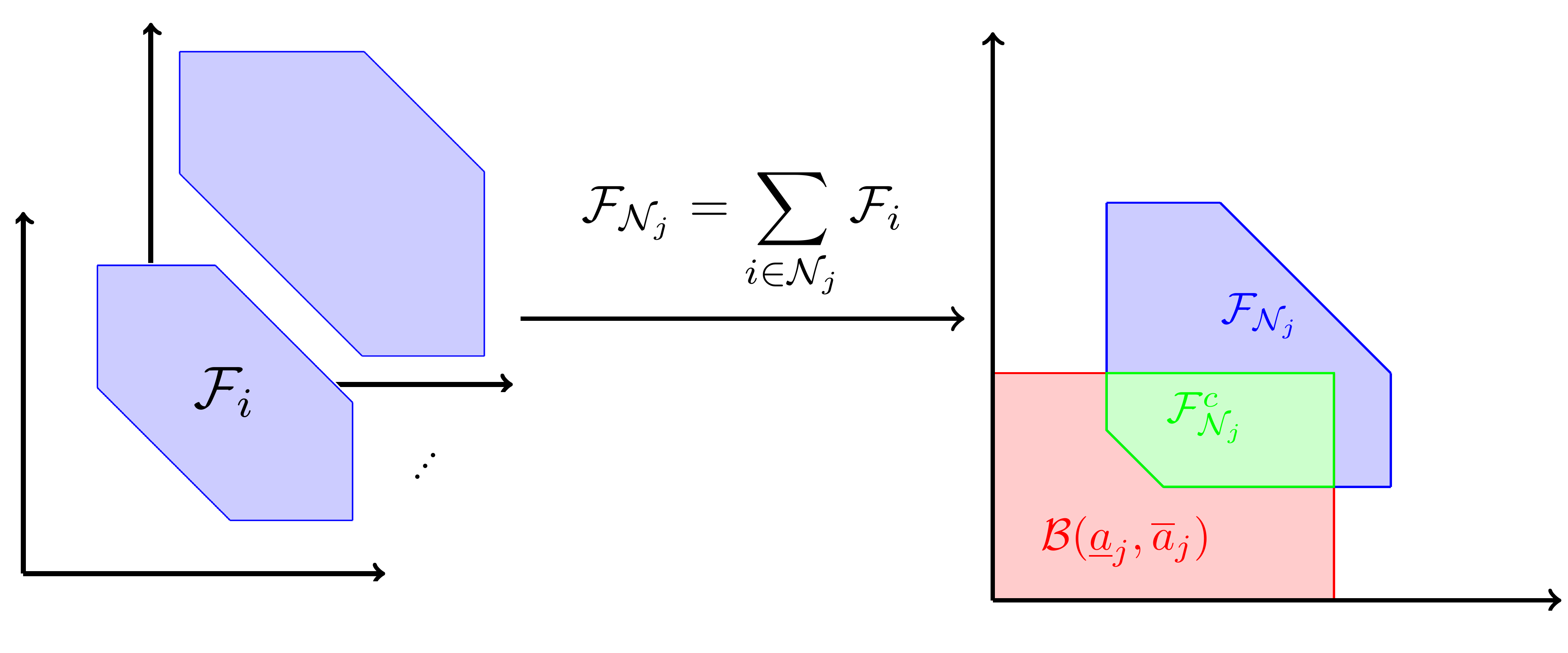}
    \caption{A schematic of the work presented in this paper. We are provided with a set of EVs with operational requirements connected to various nodes, $j \in \mathcal{J}$, on the network. The set of unconstrained aggregate load profiles at node $j$, denoted $\mathcal{F}_{\mathcal{N}_j}$ is the Minkowski sum of the devices connected to it. At each node there will be a set of network constraints $\mathcal{B}(\underline{a}_j, \overline{a}_j)$ (red region). The constrained aggregate flexibility set at node $j$ is defined in \eqref{eq:intersection} as the intersection of $\mathcal{B}(\underline{a}_j, \overline{a}_j)$ and $\mathcal{F}_{\mathcal{N}_j}$, and is denoted $\mathcal{F}_{\mathcal{N}_j}^c$ (green region).}
    \label{fig:schema}
\end{figure}

 \textit{Notation:} For a vector $ x \in \mathbb{R}^{\mathcal{T}} $, where $ \mathcal{T} \subseteq \mathbb{N} $ is a finite index set and $ t \in \mathcal{T} $, we denote by $ x(t) $ the $ t^{\text{th}} $ component of $ x $. For any subset $ A \subseteq \mathcal{T} $, we define $ x(A) := \sum_{t \in A} x(t) $. Lastly, we use the notation $ \sum(\cdot) $ to denote Minkowski summation.

\section{Unconstrained Aggregated Flexibility}\label{sec:unconstrained}
In this section, we present the charging model for a population of EVs and formally define the notion of individual and aggregate flexibility sets. We then review existing results that enable the representation of aggregate flexibility as a generalized polymatroid.

We consider an aggregator that centrally manages the charging of a population of EVs. The EVs are connected to the different feeders in the distribution network, which are indexed by $ j \in \mathcal{J} $. For each node $ j $, the set of EVs connected to that node is indexed by $ i \in \mathcal{N}_j $. The control problem is formulated over a discrete time horizon consisting of $ T $ time steps, indexed by $ t \in \mathcal{T} := \{1, \dots, T\} $.

\subsection{EV Charging Model}
Let $ u_i \in \mathbb{R}^\mathcal{T} $ denote the piecewise constant \textit{charging profile} of EV $i$, where $ u_i(t) $ represents the power consumption in the discrete time interval $t \in \mathcal{T}$. We let $ x_i \in \mathbb{R}^\mathcal{T}$ denote the \textit{state of charge} (SoC) of the EV, where $ x_i(t) $ denotes the SoC at time step $ t $. The evolution of the SoC is governed by the following discrete-time charging dynamics:
\begin{equation}\label{eq:ev_dynamics}
    x_i(t+1) = x_i(t) + u_i(t) \delta,
\end{equation}
where $ \delta $ denotes the duration of each time step. Without loss of generality, we assume $ \delta = 1 $ throughout this work. Furthermore, we assume that the initial state of charge is zero, i.e., $ x_i(0) = 0 $. This assumption also holds without loss of generality, as any non-zero initial SoC can be handled by accordingly shifting the constraints on the SoC. 

We restrict our formulation to a charging-only regime, i.e., vehicle-to-grid (V2G) capabilities are not considered. 
Let $ m_i > 0 $ denote the maximum charging rate of the EV. Each EV is assumed to be connected to the charging infrastructure for a subset of the time horizon. Specifically, let $ \underline{t}_i, \overline{t}_i \in \mathcal{T} $ denote the arrival and departure times, respectively, where $\underline{t}_i < \overline{t}_i$. We denote the subset of EVs connected to node $j$ with homogeneous arrival and departure times as $\mathcal{S}_j^{\underline{t},\overline{t}} \subseteq \mathcal{N}_j$. The EV’s \textit{connection time} is defined as the discrete interval $ \mathcal{C}_i := \{\underline{t}_i,..., \overline{t}_i \} \subset \mathcal{T} $, representing the set of time steps during which the EV is available for charging. Based on this availability, the charging power is subject to the following constraints:
\begin{subequations}\label{eq:power_constraints}
    \begin{align}
        0 \leq u_i(t) &\leq m_i  &&\forall t \in \mathcal{C}_i, \\
              u_i(t) &= 0      &&\forall t \in \mathcal{T} \setminus \mathcal{C}_i. \label{eq:p_cons2}
    \end{align}
\end{subequations}
Lastly, we require that the EVs is fully charged at its departure time $\overline{t}_i$, and so its SoC lies within the range: $\underline{e}_i \leq x_i(\overline{t}_i) \leq \overline{e}_i$, where $\underline{e}_i$ and $\overline{e}_i$ denote the lower and upper bounds, respectively, on the SoC at departure. Alternatively applying the charging dynamics from \eqref{eq:ev_dynamics} and the notation introduced above we have:
\begin{equation}\label{eq:energy_constraints}
    \underline{e}_i \leq u_i(\mathcal{T}) \leq \overline{e}_i.
\end{equation}
 Note, the power constraint of \eqref{eq:p_cons2} implicitly enforces that the EV will be charged when it departs.
\subsection{Flexibility Sets}
We now consider the set of charging profiles a single EV may take and the set of profiles a population of EVs may take. This can be done by considering the set of charging profiles that respect the constraints for each EV.
\begin{defn}
    The individual flexibility set of EV $i$, denoted $\mathcal{F}_i,$ is defined as the set of charging profiles that satisfy all operational constraints of the EV:
    \begin{equation}\label{eq:individual_flex_set}
         \mathcal{F}_i := \left\{ u_i \in \mathbb{R}^T \; \middle\vert \;
        \begin{array}{@{}cl}
                     &  0 \leq u_i(t) \leq 0  \;\;\; \forall t \in \mathcal{T} \setminus \mathcal{C}_i \\
                     &0 \leq  u_i(t) \leq m_i  \;\; \forall t \in \mathcal{C}_i  \\
             &\underline{e}_i \leq u_i(\mathcal{T})\leq \overline{e}_i  \\
        \end{array} 
        \right\}.
    \end{equation}
\end{defn}
The \textit{aggregated charging profile} for the population of EVs connected to node $j$, denoted $u_{\mathcal{N}_j}$, is the sum of the individual charging profiles of all EVs connected to that node. Formally, 
\begin{equation*}
    u_{\mathcal{N}_j} := \sum_{i \in \mathcal{N}_j} u_i.
\end{equation*}

\begin{defn}
    The \emph{aggregate flexibility set} at node $j$, denoted $\mathcal{F}_{\mathcal{N}_j},$is the set of all feasible aggregate charging profiles for the population of EVs connected to node $j$:
    \begin{equation}\label{eq:aggregate_flex_set}
         \mathcal{F}_{\mathcal{N}_j}:= \left\{ u_{\mathcal{N}_j} \in \mathbb{R}^T \mid u_{\mathcal{N}_j} := \sum_{i \in \mathcal{N}_j} u_i,  u_i \in  \mathcal{F}_i\;\; \forall i \in \mathcal{N}_j\right\}.
    \end{equation}
\end{defn}
This is, by definition,  the Minkowski sum of the collection of individual flexibility sets \cite{Barot2017APolytopes}:
\begin{equation}\label{eq:mink_sum}  
     \mathcal{F}_{\mathcal{N}_j} =  \sum_{i \in \mathcal{N}_j} \mathcal{F}_i.
\end{equation}
Representing $\mathcal{F}_{\mathcal{N}_j}$ is challenging, however we can use some tools from combinatorial optimization to help characterize these sets.
\subsection{Generalized Polymatroids}
Generalized polymatroids (g-polymatroids) are a family of polytopes that can be employed to form representations of the Minkowski sum from \eqref{eq:mink_sum}. For brevity, we do not provide a comprehensive overview of generalized polymatroids. Instead, we refer the reader to \cite{Frank2011ConnectionsOptimization} for a rigorous treatment of the subject, and to \cite{Mukhi2025ExactPolymatroids} for a discussion in the context of aggregate flexibility.

A g-polymatroid, denoted $\mathcal{Q}(p,b)$, is defined by a pair of set functions: a supermodular function $ p: 2^\mathcal{T} \rightarrow \mathbb{R} $ and a submodular function $ b: 2^\mathcal{T} \rightarrow \mathbb{R} $, which jointly characterize the lower and upper bounds of the g-polymatroid:
\begin{equation*}
    \mathcal{Q}(p,b) := \{u \in \mathbb{R}^T\; | \; p(A) \leq u(A) \leq b(A) \; \forall A \subseteq \mathcal{T}\}.
\end{equation*}
In previous work, it was shown that the individual flexibility sets defined in \eqref{eq:individual_flex_set} are g-polymatroids, as formally stated in the following lemma. Intuitively, $A$ is a subset of the time steps within the horizon, and $p(A)$ and $b(A)$ denote device's minimum and maximum energy consumption, respectively, during that subset of the time horizon.

\begin{lem}\label{lem:is_poly}
$\mathcal{F}_i$ is the g-polymatroid $Q(p_i, b_i)$, where:
\begin{align*}
    &p_i(A) := \sum_{t}^{|A \cap \mathcal{C}_i|} \underline{v}_i(t), \quad b_i(A) :=  \sum_{t}^{|A \cap \mathcal{C}_i|} \overline{v}_i(t), \\
\end{align*}
where we define $\underline{v}_i, \overline{v}_i \in \mathbb{R}^{\mathcal{C}_i}$:
\begin{equation*}
        \underline{v}_i(t) = 
        \begin{cases}
            0                               & t <  \lfloor \underline{e}_i / m_i \rfloor \\
             \textrm{rem}(\underline{e}_i, m_i )          & t =  \lfloor \underline{e}_i / m_i \rfloor \\
            m_i                              & t \geq  \lfloor \underline{e}_i / m_i \rfloor \\
        \end{cases}
    \end{equation*}
\begin{equation*}
        \overline{v}_i(t) = 
        \begin{cases}
            m_i                              & t <  \lfloor \overline{e}_i / m_i \rfloor \\
            \textrm{rem}(\overline{e}_i, m_i )         & t =  \lfloor \overline{e}_i / m_i \rfloor \\
            0                               & t \geq  \lfloor \overline{e}_i / m_i \rfloor. \\
        \end{cases}
    \end{equation*}
\end{lem}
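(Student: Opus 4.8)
The plan is to establish the set equality $\mathcal{F}_i=\mathcal{Q}(p_i,b_i)$ and, separately, that $p_i$ is supermodular and $b_i$ submodular — together these are exactly what the definition of a g-polymatroid given above requires. The first step I would take is to rewrite $p_i,b_i$ in operational terms. Writing $c_i:=|\mathcal{C}_i|$ and $a:=|A\cap\mathcal{C}_i|$, the vector $\overline{v}_i$ is non-increasing with entries in $[0,m_i]$ and total sum $\overline{e}_i$ over its $c_i$ coordinates, so its partial sums give $b_i(A)=\sum_{t=1}^{a}\overline{v}_i(t)=\min\{m_i a,\ \overline{e}_i\}$; symmetrically $\underline{v}_i$ is non-decreasing with total sum $\underline{e}_i$, so $p_i(A)=\sum_{t=1}^{a}\underline{v}_i(t)=\max\{0,\ \underline{e}_i-m_i(c_i-a)\}$. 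I would note here that these identities use the natural feasibility assumption $m_i c_i\ge\overline{e}_i$ (the EV can be charged to $\overline{e}_i$ within its connection window); if it fails, the redundant cap $m_ic_i$ simply replaces $\overline{e}_i$ and nothing below changes. Both functions depend on $A$ only through $a=|A\cap\mathcal{C}_i|$.

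Next I would check the modularity conditions. The map $A\mapsto|A\cap\mathcal{C}_i|$ is modular with nonnegative weights, $k\mapsto\min\{m_ik,\overline{e}_i\}$ is concave and $k\mapsto\max\{0,\underline{e}_i-m_i(c_i-k)\}$ is convex (each a min/max of affine functions). Since a concave (resp.\ convex) function composed with a nonnegative modular function is submodular (resp.\ supermodular), $b_i$ is submodular and $p_i$ is supermodular. If a stronger notion of g-polymatroid is wanted, the paramodular cross-inequality $b_i(X)-p_i(Y)\ge b_i(X\setminus Y)-p_i(Y\setminus X)$ can be checked directly from the closed forms; it is not needed here, since nonemptiness of $\mathcal{Q}(p_i,b_i)$ will be a byproduct of the set equality.

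For $\mathcal{F}_i\subseteq\mathcal{Q}(p_i,b_i)$: take $u_i\in\mathcal{F}_i$ and any $A\subseteq\mathcal{T}$. As $u_i$ vanishes off $\mathcal{C}_i$, $u_i(A)=u_i(A\cap\mathcal{C}_i)$. The per-slot cap gives $u_i(A\cap\mathcal{C}_i)\le m_ia$, while $u_i(A\cap\mathcal{C}_i)=u_i(\mathcal{T})-u_i(\mathcal{C}_i\setminus A)\le\overline{e}_i$ because $u_i(\mathcal{C}_i\setminus A)\ge0$; hence $u_i(A)\le\min\{m_ia,\overline{e}_i\}=b_i(A)$. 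Likewise $u_i(A\cap\mathcal{C}_i)\ge0$ and $u_i(A\cap\mathcal{C}_i)=u_i(\mathcal{T})-u_i(\mathcal{C}_i\setminus A)\ge\underline{e}_i-m_i(c_i-a)$, so $u_i(A)\ge p_i(A)$. For the reverse inclusion $\mathcal{Q}(p_i,b_i)\subseteq\mathcal{F}_i$: given $u_i\in\mathcal{Q}(p_i,b_i)$, instantiate the defining inequalities only at $A=\{t\}$ for $t\notin\mathcal{C}_i$ (yielding $u_i(t)=0$), at $A=\{t\}$ for $t\in\mathcal{C}_i$ (yielding $0\le u_i(t)\le m_i$, using $p_i(\{t\})\ge0$ and $b_i(\{t\})=\min\{m_i,\overline{e}_i\}\le m_i$), and at $A=\mathcal{T}$ (yielding $\underline{e}_i=p_i(\mathcal{T})\le u_i(\mathcal{T})\le b_i(\mathcal{T})=\overline{e}_i$); these are precisely the constraints in \eqref{eq:individual_flex_set}, so $u_i\in\mathcal{F}_i$.

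I expect the main obstacle to be the translation of the cardinality-based partial-sum definitions of $p_i$ and $b_i$ into the operational closed forms, together with getting the feasibility corner case right, since both inclusions and the (sub/super)modularity argument all lean on those closed forms. A second, more conceptual point worth flagging: the g-polymatroid description carries one inequality per subset $A\subseteq\mathcal{T}$, many of which look stronger than the three groups of constraints defining $\mathcal{F}_i$ — the proof resolves this by showing in the reverse inclusion that only the singleton and full-ground-set inequalities are needed for membership, while the forward inclusion shows every other inequality is automatically implied; it is the two directions together that give the exact identity $\mathcal{F}_i=\mathcal{Q}(p_i,b_i)$.
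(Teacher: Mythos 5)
Your proof is correct, but it takes a genuinely different route from the paper's: the paper disposes of this lemma in a single line by citing Corollary 3.7 of a companion work together with the definitions of $\underline{v}_i$ and $\overline{v}_i$, whereas you reprove it from first principles. Your chain of reasoning — translating the partial sums into the closed forms $b_i(A)=\min\{m_i\,|A\cap\mathcal{C}_i|,\ \overline{e}_i\}$ and $p_i(A)=\max\{0,\ \underline{e}_i-m_i(|\mathcal{C}_i|-|A\cap\mathcal{C}_i|)\}$, obtaining sub-/supermodularity as a concave/convex function composed with the modular map $A\mapsto|A\cap\mathcal{C}_i|$, and then establishing the two inclusions (the forward direction showing every subset inequality is implied by the constraints of \eqref{eq:individual_flex_set}, the reverse direction needing only the singleton and ground-set inequalities) — is sound and checks out line by line, including your correct identification of the feasibility condition $m_i|\mathcal{C}_i|\ge\overline{e}_i\ge\underline{e}_i$, which is also what makes $p_i(\{t\})=0$ for $t\notin\mathcal{C}_i$ in the reverse inclusion. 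What your version buys is a self-contained argument that makes transparent which of the exponentially many g-polymatroid inequalities are actually binding; what the paper's citation buys is brevity and consistency with the machinery of the referenced work. One caveat worth recording: your closed forms rest on reading $\underline{v}_i,\overline{v}_i$ as the sorted marginal-increment vectors summing to $\underline{e}_i$ and $\overline{e}_i$ respectively; taken literally, the case definitions in the statement have overlapping branches at $t=\lfloor\cdot/m_i\rfloor$, and the threshold in $\underline{v}_i$ sits at $\lfloor\underline{e}_i/m_i\rfloor$ rather than at $|\mathcal{C}_i|-\lceil\underline{e}_i/m_i\rceil$, so the literal partial sums do not reproduce your formula for $p_i$. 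You were right to flag this translation as the delicate step; your argument proves the (correct) intended statement rather than the letter of the displayed definitions.
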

\begin{proof}
    This follows from \cite[Corollary 3.7]{Mukhi2024DistributionallyFlexibility} and the definitions of $\underline{v}_i$ and $\overline{v}_i$.
\end{proof}
A particularly useful property of g-polymatroids, in this context, is that their Minkowski sum can be computed efficiently.
\begin{thm}\label{thm:g_polymatroid_sum}\cite[Theorem 14.2.15]{Frank2011ConnectionsOptimization}
    The Minkowski sum of a set of g-polymatroids is given by
    \begin{equation*}
        Q\left(p_{\mathcal{N}}, p_{\mathcal{N}}\right) =  \sum_{i \in \mathcal{N}} Q(p_i, b_i)
    \end{equation*}
    where $
        p_{\mathcal{N}} = \sum_{i \in \mathcal{N}} p_i$ and $b_{\mathcal{N}} = \sum_{i \in \mathcal{N}} b_i$.
\end{thm}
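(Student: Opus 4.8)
The plan is to prove the two inclusions $\sum_{i\in\mathcal N}Q(p_i,b_i)\subseteq Q(p_{\mathcal N},b_{\mathcal N})$ and $Q(p_{\mathcal N},b_{\mathcal N})\subseteq\sum_{i\in\mathcal N}Q(p_i,b_i)$ separately, reducing the second to the case of two g-polymatroids by induction on $|\mathcal N|$. A preliminary observation handles well-posedness: the conditions defining a (nonempty) g-polymatroid --- supermodularity of $p$, submodularity of $b$, and the cross-inequality $b(A)-p(B)\ge b(A\setminus B)-p(B\setminus A)$ for all $A,B\subseteq\mathcal T$ --- are all additive in the pair $(p,b)$, so $(p_{\mathcal N},b_{\mathcal N})$, as well as every partial sum appearing in the induction, is again such a pair, and $Q(p_{\mathcal N},b_{\mathcal N})$ is a genuine g-polymatroid.

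The forward inclusion is immediate. If $u=\sum_{i\in\mathcal N}u_i$ with $u_i\in Q(p_i,b_i)$, then for each $A\subseteq\mathcal T$ the map $x\mapsto x(A)$ is linear, so $u(A)=\sum_i u_i(A)$, and summing $p_i(A)\le u_i(A)\le b_i(A)$ over $i$ gives $p_{\mathcal N}(A)\le u(A)\le b_{\mathcal N}(A)$.

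For the reverse inclusion it suffices, by chaining $\sum_{i=1}^{n}Q(p_i,b_i)=\big(\sum_{i=1}^{n-1}Q(p_i,b_i)\big)+Q(p_n,b_n)$ and inducting, to treat two g-polymatroids $Q(p_1,b_1)$ and $Q(p_2,b_2)$. Fix $u\in Q(p_1+p_2,b_1+b_2)$; I must produce $u_1\in Q(p_1,b_1)$ with $u-u_1\in Q(p_2,b_2)$. The condition $u-u_1\in Q(p_2,b_2)$ is equivalent, for all $A$, to $u(A)-b_2(A)\le u_1(A)\le u(A)-p_2(A)$, that is, to $u_1\in Q(u-b_2,\,u-p_2)$; this set is itself a g-polymatroid, being the translate by the modular vector $u$ of the reflection $-Q(p_2,b_2)=Q(-b_2,-p_2)$. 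Hence the existence of the decomposition is precisely the assertion that $Q(p_1,b_1)\cap Q(u-b_2,\,u-p_2)\neq\emptyset$.

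I would finish by invoking the classical non-emptiness criterion for the intersection of two g-polymatroids: $Q(q_1,c_1)\cap Q(q_2,c_2)\neq\emptyset$ iff $q_1(A)\le c_2(A)$ and $q_2(A)\le c_1(A)$ for every $A\subseteq\mathcal T$. Here the two requirements read $p_1(A)\le u(A)-p_2(A)$ and $u(A)-b_2(A)\le b_1(A)$, i.e.\ $p_1(A)+p_2(A)\le u(A)\le b_1(A)+b_2(A)$, which are exactly the inequalities certifying $u\in Q(p_1+p_2,b_1+b_2)$. I expect the main obstacle to be this last step, since the intersection theorem is the substantive ingredient; a fully self-contained alternative would instead build $u_1$ greedily, component by component --- raising each $u_1(t)$ as far as the remaining paramodular slack in both $Q(p_1,b_1)$ and $Q(u-b_2,u-p_2)$ permits --- and verify via the cross-inequalities that feasibility is never destroyed, which avoids citing the intersection theorem but is considerably more computational.
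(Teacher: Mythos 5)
The paper does not actually prove this statement: Theorem~\ref{thm:g_polymatroid_sum} is imported verbatim from \cite[Theorem 14.2.15]{Frank2011ConnectionsOptimization} (note the typo there --- the left-hand side should read $Q(p_{\mathcal N}, b_{\mathcal N})$, which you silently and correctly fix). Your argument is sound as a derivation: the easy inclusion by linearity of $x\mapsto x(A)$ is fine; the observation that paramodularity (super-/submodularity plus the cross-inequality) is preserved under addition, under reflection $-Q(p,b)=Q(-b,-p)$, and under translation by a modular vector is correct and justifies both the induction on $|\mathcal N|$ and the identification of $\{u_1: u-u_1\in Q(p_2,b_2)\}$ with the g-polymatroid $Q(u-b_2,\,u-p_2)$; and the non-emptiness criterion you invoke ($Q(q_1,c_1)\cap Q(q_2,c_2)\neq\emptyset$ iff $q_1\le c_2$ and $q_2\le c_1$) does reduce exactly to the inequalities certifying $u\in Q(p_1+p_2,b_1+b_2)$. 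The one caveat is that this is a reduction rather than a proof from first principles: the g-polymatroid intersection theorem is at least as deep as the sum theorem (in Frank's book the sum theorem in Section~14.2 precedes the intersection results of Section~14.3, which are usually obtained from discrete separation/sandwich arguments), so your proof's logical weight rests entirely on that cited ingredient. What your route buys is brevity and a clean conceptual picture (``sum $=$ projection of an intersection''); what it costs is self-containedness, which your proposed greedy fallback would restore at the price of the componentwise feasibility bookkeeping you allude to.
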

Using this theorem and the characterization of the individual flexibility sets as g-polymatroids from Lemma \ref{lem:is_poly}, one can provide an exact representation of the aggregate flexibility sets from \eqref{eq:aggregate_flex_set}.
\begin{cor} The unconstrained aggregate flexibility set, $\mathcal{F}_{\mathcal{N}_j}$, is the g-polymatroid $ \mathcal{Q }\left(p_{\mathcal{N}_j}, b_{\mathcal{N}_j}\right)$ where 
    \begin{equation*}
        p_{\mathcal{N}_j} = \sum_{i \in \mathcal{N}_j} p_i, \quad \quad b_{\mathcal{N}_j} = \sum_{i \in \mathcal{N}_j} b_i.
    \end{equation*}
\end{cor}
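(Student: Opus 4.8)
The plan is to derive this corollary directly by chaining together the two results already established in the excerpt. First I would invoke Lemma~\ref{lem:is_poly}, which tells us that each individual flexibility set $\mathcal{F}_i$ equals the g-polymatroid $\mathcal{Q}(p_i, b_i)$ for the explicitly given supermodular $p_i$ and submodular $b_i$. Then, recalling from \eqref{eq:mink_sum} that the aggregate flexibility set $\mathcal{F}_{\mathcal{N}_j}$ is by definition the Minkowski sum $\sum_{i \in \mathcal{N}_j} \mathcal{F}_i$, I would substitute the g-polymatroid representations to write $\mathcal{F}_{\mathcal{N}_j} = \sum_{i \in \mathcal{N}_j} \mathcal{Q}(p_i, b_i)$.

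The second step is to apply Theorem~\ref{thm:g_polymatroid_sum} with the index set $\mathcal{N} = \mathcal{N}_j$. This immediately yields that $\sum_{i \in \mathcal{N}_j} \mathcal{Q}(p_i, b_i) = \mathcal{Q}(p_{\mathcal{N}_j}, b_{\mathcal{N}_j})$ where $p_{\mathcal{N}_j} = \sum_{i \in \mathcal{N}_j} p_i$ and $b_{\mathcal{N}_j} = \sum_{i \in \mathcal{N}_j} b_i$, which is exactly the claimed characterization. For completeness one might note in passing that a finite sum of supermodular functions is supermodular and a finite sum of submodular functions is submodular, so the pair $(p_{\mathcal{N}_j}, b_{\mathcal{N}_j})$ is a legitimate defining pair for a g-polymatroid; but this is implicitly guaranteed by Theorem~\ref{thm:g_polymatroid_sum} itself, so it need not be belabored.

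Honestly, there is no real obstacle here: the corollary is a routine specialization obtained by feeding Lemma~\ref{lem:is_poly} into Theorem~\ref{thm:g_polymatroid_sum}. The only point requiring minor care is bookkeeping — making sure the index set in the general Minkowski-sum theorem is instantiated consistently as $\mathcal{N}_j$ (the EVs at node $j$) rather than a generic $\mathcal{N}$, and that the paralogous roles of $p$ (lower, supermodular) and $b$ (upper, submodular) are not transposed. The proof can therefore be stated in two or three sentences.

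\begin{proof}
By Lemma~\ref{lem:is_poly}, each $\mathcal{F}_i = \mathcal{Q}(p_i, b_i)$, and by \eqref{eq:mink_sum} the aggregate flexibility set satisfies $\mathcal{F}_{\mathcal{N}_j} = \sum_{i \in \mathcal{N}_j} \mathcal{F}_i = \sum_{i \in \mathcal{N}_j} \mathcal{Q}(p_i, b_i)$. Applying Theorem~\ref{thm:g_polymatroid_sum} with $\mathcal{N} = \mathcal{N}_j$ gives $\sum_{i \in \mathcal{N}_j} \mathcal{Q}(p_i, b_i) = \mathcal{Q}(p_{\mathcal{N}_j}, b_{\mathcal{N}_j})$ with $p_{\mathcal{N}_j} = \sum_{i \in \mathcal{N}_j} p_i$ and $b_{\mathcal{N}_j} = \sum_{i \in \mathcal{N}_j} b_i$, as claimed.
\end{proof}
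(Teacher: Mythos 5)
Your proof is correct and follows exactly the route the paper intends: the corollary is presented as an immediate consequence of combining Lemma~\ref{lem:is_poly} (each $\mathcal{F}_i = \mathcal{Q}(p_i,b_i)$), the Minkowski-sum identity \eqref{eq:mink_sum}, and Theorem~\ref{thm:g_polymatroid_sum}. Nothing is missing.
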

This provides us with a succinct representation of the aggregate flexibility in a population of EVs. Note that the flexibility sets for a more general class of devices can also be represented as g-polymatroids, including devices in which discharging is allowed, i.e. EVs \textit{with} V2G capabilities. However, the super- and submodular functions associated with these devices and hence the aggregate sets can be more complex, and so in this paper we will restrict our attention to populations of EVs with charging only capabilities. 

\section{Constrained Aggregated Flexibility}\label{sec:constrained}
In this section we consider the case where the population share a feeder, subjecting the aggregation to a common network constraint, and we modify the super- and submodular functions that generate the aggregations so that they respect these constraints. We start by formally defining the feasible power flows through the feeder and then use results in the literature to show how one can derive the super- and submodular functions generating the network constrained aggregate flexibility sets. Finally, we demonstrate how these sets can be composed to model flexibility across different regions of the network.

\subsection{Network Constraint Sets}
To start we consider the set of feasible power profiles through node $j$ of the network. We let $\underline{r}_j$ and $\overline{r}_j$ denote the lower and upper limits, respectively, on the power flow through the node. We assume there is a nominal, time-dependent, power flow through the node, due to power consumption from non-flexible devices, we denote this with $u^0_j\in \mathbb{R}^\mathcal{T}$. From this we can define our time-dependent lower and upper limits on power flow as $\underline{a}_j(t) = u^0_j(t) - \underline{r}_j$ and $\overline{a}_j(t) =  \overline{r}_j - u^0_j(t) $, respectively. 
\begin{defn}\cite[14.1]{Frank2011ConnectionsOptimization}
    For $\underline{a}, \overline{a} \in \mathbb{R}^T$, where $\underline{a} \leq \overline{a}$ we define a \emph{box} as:
    \begin{equation*}
    \mathcal{B}(\underline{a}, \overline{a}) := \{u \in \mathbb{R}^T\; | \; \underline{a}(t) \leq u(t) \leq \overline{a}(t) \; \forall t  \in  \mathcal{T}\}.
\end{equation*}
\end{defn}
For a feeder with time dependent lower and upper power flow limits, $\underline{a}$ and $\overline{a}$, the set $\mathcal{B}(\underline{a}, \overline{a})$ represents the set of all feasible power flow profiles over the time horizon, through the feeder. Geometrically, this corresponds to the red region in Fig.~\ref{fig:schema}.
\begin{lem}
    The constrained aggregate flexibility set at node $ j $, denoted $ \mathcal{F}^c_{\mathcal{N}_j} $, is defined as:
    \begin{equation}\label{eq:intersection}
        \mathcal{F}^c_{\mathcal{N}_j} = \mathcal{Q}\left(p_{\mathcal{N}_j}, b_{\mathcal{N}_j}\right) \cap \mathcal{B}(\underline{a}_j, \overline{a}_j),
    \end{equation}
\end{lem}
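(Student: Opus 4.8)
The plan is to derive the stated identity by unpacking what a \emph{network-constrained} aggregate profile is operationally, and then substituting in the representation results already in hand. First I would make explicit the first-principles definition that this lemma formalises: $\mathcal{F}^c_{\mathcal{N}_j}$ should be the set of aggregate charging profiles $u_{\mathcal{N}_j} \in \mathbb{R}^{\mathcal{T}}$ that are simultaneously (i) \emph{realisable} by the EV population at node $j$, i.e.\ $u_{\mathcal{N}_j} = \sum_{i \in \mathcal{N}_j} u_i$ for some selection $u_i \in \mathcal{F}_i$, and (ii) \emph{admissible} for the feeder, i.e.\ the total flow $u^0_j(t) + u_{\mathcal{N}_j}(t)$ lies in $[\underline{r}_j, \overline{r}_j]$ for every $t \in \mathcal{T}$. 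The conceptual point underpinning (ii) is that the feeder only ``sees'' the aggregate flow through node $j$, so the network limit is a constraint on $u_{\mathcal{N}_j}$ alone and no separate device-level network constraint is required.

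I would then rewrite the two conditions in terms of the objects appearing in the statement. Condition (i) is, by \eqref{eq:aggregate_flex_set}--\eqref{eq:mink_sum}, exactly $u_{\mathcal{N}_j} \in \mathcal{F}_{\mathcal{N}_j}$, and the corollary in Section~\ref{sec:unconstrained} (obtained from Lemma~\ref{lem:is_poly} and Theorem~\ref{thm:g_polymatroid_sum}) gives $\mathcal{F}_{\mathcal{N}_j} = \mathcal{Q}(p_{\mathcal{N}_j}, b_{\mathcal{N}_j})$. Condition (ii), after rearranging $\underline{r}_j \leq u^0_j(t) + u_{\mathcal{N}_j}(t) \leq \overline{r}_j$ term by term, becomes the time-dependent bounds $\underline{a}_j(t) \leq u_{\mathcal{N}_j}(t) \leq \overline{a}_j(t)$ with $\underline{a}_j, \overline{a}_j$ as defined above, which is precisely $u_{\mathcal{N}_j} \in \mathcal{B}(\underline{a}_j, \overline{a}_j)$. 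Intersecting the feasible sets defined by (i) and (ii) then yields $\mathcal{F}^c_{\mathcal{N}_j} = \mathcal{Q}(p_{\mathcal{N}_j}, b_{\mathcal{N}_j}) \cap \mathcal{B}(\underline{a}_j, \overline{a}_j)$.

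Since the argument is essentially a change of representation layered on the corollary, I do not expect a genuine obstacle; the only step that deserves care is verifying that imposing the network limit on the aggregate profile is lossless. For that I would again invoke \eqref{eq:mink_sum}: any $u_{\mathcal{N}_j}$ on the right-hand side lies in $\mathcal{F}_{\mathcal{N}_j}$, hence admits a decomposition $u_{\mathcal{N}_j} = \sum_{i \in \mathcal{N}_j} u_i$ with each $u_i \in \mathcal{F}_i$, and this decomposition respects the feeder limit automatically because its sum equals $u_{\mathcal{N}_j} \in \mathcal{B}(\underline{a}_j, \overline{a}_j)$. This confirms that every point of the intersection is actually deliverable by the fleet, which both justifies calling it \emph{the} constrained aggregate flexibility set and sets up the disaggregation procedure of Section~\ref{sec:disag}.
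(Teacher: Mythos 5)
Your proposal is correct and follows essentially the same route as the paper, whose proof is a one-line observation that the lemma simply applies the box constraint $\mathcal{B}(\underline{a}_j,\overline{a}_j)$ to the already-established identity $\mathcal{F}_{\mathcal{N}_j}=\mathcal{Q}(p_{\mathcal{N}_j},b_{\mathcal{N}_j})$. Your additional check that every point of the intersection is realisable by the fleet is a useful (if implicit in the paper) clarification, but it does not change the argument.
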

\begin{proof}
    The result follows directly by applying the local power flow constraints at node $ j $, represented by $ \mathcal{B}(\underline{a}_j, \overline{a}_j) $, to the aggregate flexibility set $ \mathcal{F}_{\mathcal{N}_j} = \mathcal{Q}\left(p_{\mathcal{N}_j}, b_{\mathcal{N}_j}\right) $.
\end{proof}
We now aim to derive a $ g $-polymatroid representation of the set $ \mathcal{F}^c_{\mathcal{N}_j} $. In particular, we seek to identify the supermodular and submodular functions that generate the corresponding $ g $-polymatroid structure. To this end, we make use of the following theorem.
\begin{thm}\cite[Theorem 14.3.9]{Frank2011ConnectionsOptimization}\label{thm:box_g_poly}
    Given a g-polymatroid $\mathcal{Q}\left(p,b\right)$, the intersection $\mathcal{F} = \mathcal{Q}\left(p,b\right) \cap \mathcal{B}(\underline{a}, \overline{a})$ is given by $\mathcal{F} = \mathcal{Q}\left( p',b' \right)$ where 
    \begin{align*}
        &p'(A) := \max_{X \subseteq \mathcal{T}} \;\{ p(X) - \underline{a}(X-A) + \overline{a}(A-X)\}, \\
        &b'(A) := \min_{X \subseteq \mathcal{T}} \;\{ b(X) - \overline{a}(X-A) + \underline{a}(A-X)\}.
    \end{align*}
\end{thm}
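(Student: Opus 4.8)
Proving this amounts to establishing two things: that the pair $(p',b')$ is again paramodular, so that $\mathcal{Q}(p',b')$ is a well-defined g-polymatroid, and the set equality $\mathcal{Q}(p',b') = \mathcal{Q}(p,b)\cap\mathcal{B}(\underline{a},\overline{a})$. The plan is to dispatch the set equality first, since it is elementary, and then to verify paramodularity, which is where the real difficulty lies.

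For the set equality, the workhorse is the identity
\begin{equation*}
u(A) = u(X) + u(A\setminus X) - u(X\setminus A),
\end{equation*}
valid for all $A,X\subseteq\mathcal{T}$ and all $u\in\mathbb{R}^{\mathcal{T}}$, obtained by splitting $A$ and $X$ into the disjoint pieces $A\cap X$, $A\setminus X$ and $X\setminus A$. To show $\mathcal{Q}(p,b)\cap\mathcal{B}(\underline{a},\overline{a})\subseteq\mathcal{Q}(p',b')$, I would take $u$ in the left-hand side and a subset $A$, and for each $X$ bound the term $u(X)$ by its g-polymatroid bound and the terms $u(A\setminus X)$, $u(X\setminus A)$ by the box bounds; this reproduces exactly the $X$-indexed quantity inside the extremum defining $p'(A)$ (respectively $b'(A)$), so that $p'(A)\le u(A)\le b'(A)$. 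For the reverse inclusion, take $u\in\mathcal{Q}(p',b')$: evaluating the optimisation defining $p'(A)$ and $b'(A)$ at $X=A$ gives $p(A)\le p'(A)$ and $b'(A)\le b(A)$, hence $u\in\mathcal{Q}(p,b)$, while evaluating at $X=\emptyset$ (so that only the box terms survive, using $p(\emptyset)=b(\emptyset)=0$) leaves the per-coordinate box inequalities, hence $u\in\mathcal{B}(\underline{a},\overline{a})$.

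The substantive step is to check that $(p',b')$ is paramodular: $p'$ supermodular, $b'$ submodular, and the cross-inequality $b'(A)-p'(B)\ge b'(A\setminus B)-p'(B\setminus A)$ for all $A,B\subseteq\mathcal{T}$. For each fixed $X$ the map $A\mapsto p(X)-\underline{a}(X\setminus A)+\overline{a}(A\setminus X)$ is modular in $A$ up to an additive constant, so $p'$ is a pointwise maximum of such functions, and such a maximum need not be supermodular; the supermodularity of $p$, together with the hypothesis $\underline{a}\le\overline{a}$ coordinatewise, must be exploited. The approach I would take is the standard ``uncrossing'' argument: pick maximisers $X_A$, $X_B$ attaining $p'(A)$, $p'(B)$, apply $p(X_A)+p(X_B)\le p(X_A\cup X_B)+p(X_A\cap X_B)$, and verify that the modular $\underline{a}$- and $\overline{a}$-contributions reorganise so that the resulting quantity is at most $p'(A\cup B)+p'(A\cap B)$, taking $X_A\cup X_B$ and $X_A\cap X_B$ as the witnesses for $A\cup B$ and $A\cap B$. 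The submodularity of $b'$ is dual, uncrossing minimisers, and the cross-inequality follows similarly by uncrossing a maximiser of the $p'$-expression against a minimiser of the $b'$-expression and invoking the cross-inequality already satisfied by $(p,b)$ as a g-polymatroid pair. I expect this bookkeeping --- lining up the many set-difference terms so the modular parts cancel cleanly --- to be the main obstacle, though it is routine within the g-polymatroid calculus.

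As an alternative I would consider assembling the intersection one coordinate at a time: a box is precisely the g-polymatroid with modular bounds, and $\mathcal{Q}(p,b)\cap\mathcal{B}(\underline{a},\overline{a})$ can be realised as a sequence of single-coordinate truncations $\mathcal{Q}(\cdot)\cap\{u\mid u(t)\le c\}$ and $\mathcal{Q}(\cdot)\cap\{u\mid u(t)\ge c\}$, each of which is known to remain a g-polymatroid with an explicitly updated paramodular pair. Composing these updates over all $t\in\mathcal{T}$ and simplifying would recover the closed forms for $p'$ and $b'$ by induction, requiring only the easy one-coordinate paramodularity check at each step in exchange for more bookkeeping in reaching the final expressions.
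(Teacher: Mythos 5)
The paper offers no proof of this statement---it is quoted from Frank's book (Theorem 14.3.9)---so there is no in-paper argument to compare against; your proposal is essentially a reconstruction of the standard textbook proof (elementary double inclusion via the splitting identity, plus uncrossing for paramodularity), and the overall plan is sound. The alternative you sketch, building the intersection from single-coordinate truncations, is also how Frank actually organises the material, so both routes are legitimate.

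Two concrete issues. First, your claim that bounding $u(X)$, $u(A\setminus X)$ and $u(X\setminus A)$ ``reproduces exactly the $X$-indexed quantity inside the extremum defining $p'(A)$'' is not true of the formulas as printed. Carrying out your own step correctly gives
\begin{equation*}
u(A) = u(X) - u(X\setminus A) + u(A\setminus X) \;\ge\; p(X) - \overline{a}(X-A) + \underline{a}(A-X),
\end{equation*}
i.e.\ the roles of $\underline{a}$ and $\overline{a}$ are the reverse of those in the stated $p'$ (and likewise for $b'$). A sanity check confirms the printed version cannot be right: with $X=\emptyset$ and $A=\{t\}$ it yields $p'(\{t\})\ge \overline{a}(t)$, which would force $u(t)\ge\overline{a}(t)$ rather than $u(t)\ge\underline{a}(t)$. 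So the statement as transcribed has $\underline{a}$ and $\overline{a}$ transposed relative to Frank's theorem; your method detects this, and you should have flagged the discrepancy rather than asserting exact agreement. Second, the paramodularity of $(p',b')$ (in particular $p'(\emptyset)=b'(\emptyset)=0$ and the cross-inequality) requires the hypothesis that $\mathcal{Q}(p,b)\cap\mathcal{B}(\underline{a},\overline{a})$ is nonempty, which is present in Frank's statement but missing both from the paper and from your write-up; without it the uncrossing argument does not deliver a valid paramodular pair. With those two corrections, your outline (including the $\underline{a}\le\overline{a}$ cancellation in the modular bookkeeping, which does work out case by case on indicators) goes through.
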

Using this theorem the super- and submodular functions generating the g-polymatroid representation of the constrained aggregated flexibility set at node $j$, $ \mathcal{F}^c_{\mathcal{N}_j} $  given by 
\begin{align*}
    & p_{\mathcal{N}_j}^c(A) := \max_{X \subseteq \mathcal{T}} \;\{  p_{\mathcal{N}_j}(X) - \underline{a}_j(X-A) + \overline{a}_j(A-X)\}, \\
    & b_{\mathcal{N}_j}^c(A) := \min_{X \subseteq \mathcal{T}} \;\{  b_{\mathcal{N}_j}(X) - \overline{a}_j(X-A) + \underline{a}_j(A-X)\}.
\end{align*}
Essentially this involves solving a submodular function minimization (and supermodular maximization) problem. Details of algorithms that solve the problem in strongly polynomial time for general functions can be found in \cite{Schrijver2000ATime} and \cite{Iwata2001AFunctions}.

\subsection{Composing Aggregations}
We consider an aggregator managing populations of EVs distributed across various feeders within a power distribution network, where each node is subject to its own local power constraints. We seek to characterize the set of aggregate load profiles that can be feasibly consumed by the entire population, while respecting the individual and network-level operational constraints. By using Theorem~\ref{thm:g_polymatroid_sum}, the aggregate flexibility set of devices connected throughout the network can be characterized as follows:
\begin{equation*}
    \mathcal{F}^c_\mathcal{N} = \sum_{j \in \mathcal{J}} \mathcal{F}^c_{\mathcal{N}_j} = \mathcal{Q}(p_\mathcal{N}^c, b_\mathcal{N}^c)
\end{equation*}
where the super and submodular functions are given by
\begin{equation*}
p_\mathcal{N}^c = \sum_{j \in \mathcal{J}}  p_{\mathcal{N}_j}^c, \quad \quad b_\mathcal{N}^c = \sum_{j \in \mathcal{J}}  b_{\mathcal{N}_j}^c,
\end{equation*}
$\mathcal{J}$ denotes the set of feeders, and $ p_{\mathcal{N}_j}^c$, $ b_{\mathcal{N}_j}^c$ are the network-constrained super- and submodular functions, respectively, that characterize the aggregate flexibility at each feeder.
\begin{rem}
In scenarios where device populations are subject to multiple layers of constraints—such as those imposed by the hierarchical structure of a radial distribution network—aggregation can be performed recursively. Specifically, by aggregating the flexibility sets of child nodes connected to a common parent node and subsequently applying Theorem~\ref{thm:box_g_poly}, one can impose the network constraints at the parent node. This process can be repeated hierarchically to capture the full network structure.
\end{rem}

\section{Disaggregation}\label{sec:disag}
Having derived an exact representation of the aggregate flexibility set for a population of network-constrained EVs in the preceding section, in this section we turn our attention to optimizing over this set and disaggregating this optimal aggregate consumption profile. Specifically, we aim to determine an optimal aggregate charging profile that maximizes a system-level objective while satisfying network and device-level constraints. Following this, the optimal aggregate solution must be \textit{disaggregated} across individual EVs within the population, such that the resulting individual charging trajectories collectively realize the optimal aggregate profile while adhering to local constraints.

We consider an aggregator that is tasked with optimizing and disaggregating the consumption of a fleet of EVs connected to the same feeder, in the day-ahead electricity market, subject to network constraints, i.e. optimizing and disaggregating over $\mathcal{F}_{\mathcal{N}_j}^c$ This can be formalized with the following LP:
\begin{equation}\label{eq:lp_g_polymatroid}
\underset{}{\text{minimize}}\;\;  c^Tu \quad  
\text{subject to} \;\; u \in \mathcal{F}_{\mathcal{N}_j}^c,
\end{equation}
where $c \in \mathbb{R}^{\mathcal{T}}$ denotes the day-ahead electricity price vector. Since $\mathcal{F}_{\mathcal{N}_j}^c$ is a g-polymatroid, Problem \eqref{eq:lp_g_polymatroid} admits an efficient solution via the greedy algorithm for g-polymatroids, as established in~\cite[Theorem 14.5.2]{Frank2011ConnectionsOptimization}. The solution requires $T+1$ evaluations of either $p_{\mathcal{N}_j}^c$ or $b_{\mathcal{N}_j}^c$, which can be parallelized. We denote the resulting solution by $u_{\mathcal{N}_j}^*$.

Now, given this optimal aggregate load profile $u_{\mathcal{N}_j}^* \in \mathcal{F}_{\mathcal{N}_j}^c$, we focus on disaggregating this load profile among EVs in the population. Formally, we seek to determine individual load profiles that are feasible for each device, collectively satisfy the coupling network constraints, and sum to the optimal aggregate load profile. That is, for all $i \in \mathcal{N}_j$, we aim to find the set of $u^*_i$ such that
\begin{equation}\label{eq:disaggregation_constraints}
        u^*_{\mathcal{N}_j} = \sum_{i \in \mathcal{N}_j} u_i^* \quad \textrm{s.t.} \quad u_i^* \in \mathcal{F}_i \;\; \forall i
\end{equation}

To accomplish this, we first apply Frank–Wolfe decomposition to represent the optimal aggregate profile $u^*_{\mathcal{N}_j} $ as a convex combination of a subset of the vertices of the \textit{unconstrainted} aggregate flexibility set $\mathcal{F}_{\mathcal{N}_j}$~\cite{Frank1956AnProgramming}. By Carathéodory’s Theorem, any point in a $T$-dimensional polytope can be expressed as a convex combination of at most $T+1$ of its vertices~\cite[Proposition 1.15]{Ziegler2012LecturesPolytopes}. Therefore, we can write:
\begin{equation}\label{eq:convex_combination}
    u^*_{\mathcal{N}_j} = \sum_{k=1}^{T+1} \lambda_k v^{(k)}
\end{equation}
where $v^{(k)}$ are vertices of $\mathcal{F}_{\mathcal{N}_j}$, and we have the following constraints $\sum_{k=1}^{T+1} \lambda_k = 1$ and $\lambda_k \geq 0$ for all $k$. As $\mathcal{F}_{\mathcal{N}_j}$ is defined as the Minkowski sum from \eqref{eq:mink_sum}, every vertex $v^{(k)}\in\operatorname{vert}(\mathcal{F})$ can be written as the sum of vertices of the summand polytopes \cite{Fukuda2004FromPolytopes}; that is, there exist $v_i^{(k)}\in\operatorname{vert}(\mathcal{F}_i)$ such that
\begin{equation}\label{eq:vertex_disagg}
    v^{(k)}  = \sum_{i \in \mathcal{N}_j} v^{(k)}_i,
\end{equation}
where $v^{(k)}$ and $v^{(k)}_i$ are minimzers of the same linear cost function, over $\mathcal{F}_{\mathcal{N}_j}$ and $\mathcal{F}_i$ respectively. 
Using this decomposition, the expression in \eqref{eq:convex_combination} can be rewritten as
$u_{\mathcal{N}_j}^* = \sum_{i \in \mathcal{N}_j}  u^*_i$, where we define 
\begin{equation}\label{eq:disagg}
    u^*_i = \sum_{k=1}^{T+1} \lambda_k v^{(k)}_i.
\end{equation}
By construction, the disaggregation defined in \eqref{eq:disagg} ensures that the disaggregated profiles $ u_i^* $ collectively satisfy the constraints of \eqref{eq:disaggregation_constraints}. Moreover, since each $ u_i^* $ is a convex combination of vertices of $\mathcal{F}_i$, it follows directly that $ u_i^* \in \mathcal{F}_i$ for all $i \in \mathcal{N}$. Thus, the profiles $ \{u_i^*\}_{i \in \mathcal{N}} $ constitute a valid disaggregation of the optimal aggregate solution.

\section{Case Study}\label{sec:case}
\begin{figure}[t]
    \centering
    \includegraphics[width=\columnwidth]{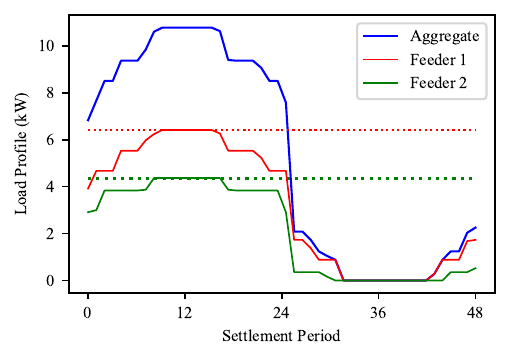}
    \caption{Optimized aggregate load profile for the entire EV population and for each feeder, shown alongside their respective network constraints.}
    \label{fig:aggregated}
\end{figure}
In this section, we present a simple case study to demonstrate the practical application of the proposed framework. We consider an aggregator managing a population of EVs partitioned into two subsets, each connected to the distribution network via a distinct feeder.
The aggregator seeks to minimize the aggregate charging cost of the population in the day-ahead market, over the network-constrained aggregate flexibility set, and disaggregating the aggregate profile across individual devices whilst respecting their local constraints.

We consider this problem over a time horizon consisting of 48 settlement periods. We simulate a population of 10 EVs per feeder group, with operational constraints sampled following the approach described in \cite{Mukhi2025ExactPolymatroids}. This work is scalable to significantly larger EV populations, however we restrict the simulation to 20 vehicles for clarity.
Each feeder is subject to a network constraint, which for clarity we assume is time-invariant. However, in general, the constraints may vary over time. We solve an LP over the aggregate flexibility with a synthesized cost vector.
Fig.~\ref{fig:aggregated} depicts the optimized aggregate consumption profile, along with the aggregate profiles for each feeder. The network constraints imposed at each feeder are also shown for reference. Fig.~\ref{fig:disaggregated} illustrates the aggregate consumption profile at one of the feeders, along with the disaggregated charging profiles of the individual EVs connected to that feeder. 
This demonstrates how the proposed method successfully schedules individual EVs while collectively satisfying the shared network constraints.

\section{Conclusion}\label{sec:conc}
This paper extends the g-polymatroid framework for aggregating flexibility to capture network constraints. By leveraging a classical intersection theorem for g-polymatroids and boxes, we derived super- and submodular set functions that yield an \textit{exact}, compact representation of the network-constrained aggregate charging flexibility of large, heterogeneous EV fleets. These sets inherit the algorithmic advantages of g-polymatroids: linear objectives can be optimized with a greedy algorithm, and the sets compose naturally across multiple feeders, enabling hierarchical modeling of entire feeders. Using this representation, we proposed a disaggregation procedure which maps the optimal aggregate trajectory to individual EV charging profiles that  satisfy both device and network constraints. 
A numerical case study demonstrated that the approach successfully respects feeder constraints while exploiting price signals, thereby unlocking cost-effective flexibility at scale.
Beyond EVs, this framework is applicable to other flexible devices, offering a unifying, tractable foundation for aggregators seeking to 
integrate large volumes of distributed flexibility while guaranteeing network security.

\begin{figure}[!t]
    \centering
    \includegraphics[width=\columnwidth]{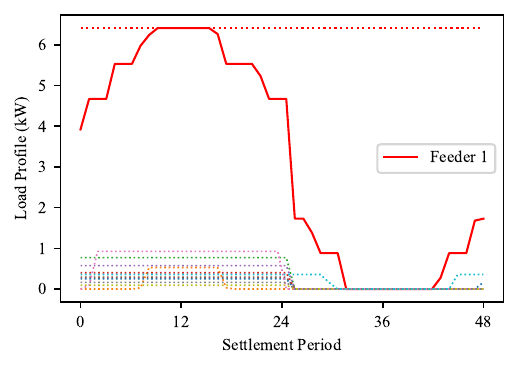}
    \caption{Optimized aggregate load profile at a feeder, along with the disaggregated charging profiles of the individual EVs connected to that feeder.}
    \label{fig:disaggregated}
\end{figure}

\bibliographystyle{IEEEtran}
\bibliography{references}

\end{document}